\DeclareFontFamily{OT1}{slmss}{}
\DeclareFontShape{OT1}{slmss}{m}{n}
     {<-8.5> s*[1.1] rm-lmss8
      <8.5-9.5> s*[1.1] rm-lmss9
      <9.5-11> s*[1.1] rm-lmss10
      <11-15.5> s*[1.1] rm-lmss12
      <15.5-> s*[1.1] rm-lmss17
     }{}
\DeclareSymbolFont{sfoperators}{OT1}{slmss}{m}{n}
\DeclareSymbolFontAlphabet{\mathsf}{sfoperators}
\def\operator@font{\mathgroup\symsfoperators}
\newtheorem{theorem}{Theorem}[section]
\newtheorem{lemma}[theorem]{Lemma}
\newenvironment{definition}[1][Definition.]{\begin{trivlist}
\item[\hskip \labelsep {\bfseries #1}]}{\end{trivlist}}
\newenvironment{qv}
{\quote\Verbatim}
{\endVerbatim\endquote}
\DeclareMathOperator{\id}{id}
\DeclareMathOperator{\proc}{proc}
\DeclareMathOperator{\type}{type}
\DeclareMathOperator{\addr}{addr}
\DeclareMathOperator{\val}{val}
\DeclareMathOperator{\acyclic}{acyclic}
\DeclareMathOperator{\irreflexive}{irreflexive}
\DeclareMathOperator{\po}{po}
\DeclareMathOperator{\pol}{pol}
\DeclareMathOperator{\co}{co}
\DeclareMathOperator{\rf}{rf}
\DeclareMathOperator{\rfe}{rfe}
\DeclareMathOperator{\fr}{fr}
\DeclareMathOperator{\fre}{fre}
\DeclareMathOperator{\corf}{co;rf}
\DeclareMathOperator{\frrf}{fr;rf}
\DeclareMathOperator{\com}{com}
\DeclareMathOperator{\comp}{com^+}
\DeclareMathOperator{\hb}{hb}
\DeclareMathOperator{\ppo}{ppo}
\DeclareMathOperator{\fence}{fence}
\DeclareMathOperator{\prop}{prop}
\DeclareMathOperator{\poto}{\xrightarrow{po}}
\DeclareMathOperator{\polto}{\xrightarrow{pol}}
\DeclareMathOperator{\coto}{\xrightarrow{co}}
\DeclareMathOperator{\rfto}{\xrightarrow{rf}}
\DeclareMathOperator{\frto}{\xrightarrow{fr}}
\DeclareMathOperator{\frrfto}{\xrightarrow{fr;rf}}
\DeclareMathOperator{\corfto}{\xrightarrow{co;rf}}
\DeclareMathOperator{\comto}{\xrightarrow{com}}
\DeclareMathOperator{\compto}{\xrightarrow{com^+}}
\newcommand{\rfinv}{\rf^{-1}}
\title{An ACL2 Mechanization of an Axiomatic Framework for Weak Memory}
\author{Benjamin Selfridge
\institute{University of Texas at Austin\\ Austin, TX}
\email{benself@cs.utexas.edu}
}
\begin{document}
\maketitle

\begin{abstract}Proving the correctness of programs written for multiple processors is a challenging problem, due in no small part to the weaker memory guarantees afforded by most modern architectures. In particular, the existence of store buffers means that the programmer can no longer assume that writes to different locations become visible to all processors in the same order. However, all practical architectures do provide a collection of weaker guarantees about memory consistency across processors, which enable the programmer to write provably correct programs in spite of a lack of full sequential consistency. In this work, we present a mechanization in the ACL2 theorem prover of an axiomatic weak memory model (introduced by Alglave et al. \cite{alglave_cats}). In the process, we provide a new proof of an established theorem involving these axioms.
\end{abstract}

\section{Introduction}

Analysis of sequential programs is a well-understood problem for which a variety of proof techniques and methodologies exist. \cite{hoare69} Many of these techniques can be adapted to a multiprocessor setting if we assume \emph{sequential consistency} (SC) - i.e., that for any concurrent execution of the program, there exists an interleaving of the memory events that is consistent with both the program order and the communication dependencies between processes. \cite{lamport79, owicki76} However, sequential consistency turns out to be a much stronger requirement than is practically necessary. Moreover, due to the inherently high runtime and resource penalties of SC, designers of multiprocessor architectures are motivated to relax this constraint in order to achieve better performance.

To understand why a lack of sequential consistency impacts us as programmers, consider the following example. Suppose our architecture consists of a number of processors $P_1, \ldots, P_n$ and a shared memory $M$. Assume that when a processor issues a write to memory, that write is immediately visible to all other processors. 

\begin{figure}[t]
\[
\begin{array}{c|c}
P_0   &   P_1 \\ \hhline{=|=}
x \leftarrow 1   & y \leftarrow 1 \\ 
r_0 \leftarrow y & r_1 \leftarrow x \\ 
\end{array}
\]
\caption{A multiprocessor program execution. The final state $r_0 = 0$, $r_1 = 0$ is prohibited by sequential consistency, but is possible on an architecture with store buffers.}
\label{sb}
\end{figure}

Consider the program execution represented in Figure \ref{sb}. Each processor assigns the value $1$ to memory location $x$ or $y$, and reads the value at the other location into a register. (Assume $x$ and $y$ are both initially equal to $0$.) Now, we ask the question: what are the possible values of registers $r_0$ and $r_1$ after running this program? It is easy to see that $r_0 = 1$, $r_1 = 1$ is one possible final state, obtained by a scheduler that alternates between $P_0$ and $P_1$. We can also obtain $r_0 = 0$, $r_1 = 1$ by running $P_0$'s program to the end, and then subsequently running $P_1$'s program to the end. Likewise, it is also possible to obtain $r_0 = 1$, $r_1 = 0$. These are the only possible final states, because this (sketch of an) architecture is sequentially consistent; every processor completely executes its first instruction before continuing to the second.

Now, consider the following modification of this architecture. Each of the processors $P_i$ is equipped with a \emph{store buffer} $B_i$. When $P_i$ issues a write, instead of propagating the write directly to shared memory, the write is initially sent to buffer $B_i$. That write will eventually hit memory, although we have no guarantee of when that will happen (unless the programmer inserts an explicit memory fence). If $P_i$ wishes to read a value from memory, it first checks its own store buffer to see if it has issued any pending writes to that memory location. If it has, it uses that value; otherwise, it obtains the value from memory. 

If we run the same program on this architecture, it is easy to see that the final state $r_0 = 0$, $r_1 = 0$ is obtainable if neither processor's store buffer is flushed before the reads are performed; both processors issue a write, but those writes are not globally visible by the time each process issues its read, and hence both processors read the ``old'' values of $x$ and $y$. This is a clear violation of sequential consistency. There is no way to linearly order the instructions of the two programs as atomic memory events and obtain this final state; nevertheless, this behavior is possible on this architecture. This odd behavior isn't merely a theoretical possibility; it is actually observable on x86 machines.

In spite of the fact that we do not generally have sequential consistency, most weaker memory models do uphold a set of guarantees which, though they are not as strong as sequential consistency, do prohibit certain behaviors. These guarantees vary greatly from model to model \cite{boudol09, chong08, owens09, sarkar11, sarkar09}, and the variety and abundance of these models suggests the need for a more generic framework for weak memory. Such a framework ought to be both general enough to capture the semantics of all modern architectures, and strong enough to enforce meaningful constraints that are universally upheld. One such framework is introduced in Alglave et al. \cite{alglave_cats}, and in this paper we present its mechanization in ACL2. Furthermore, we present a new proof of an established theorem about this framework, and we discuss the mechanized proof.

A brief notational remark: throughout this paper, given a relation $R$, we will let $R^+$ denote the irreflexive transitive closure of $R$. Given two relations $R$ and $Q$, we let $R;Q$ denote the sequencing of $R$ and $Q$, i.e.
\[
x \xrightarrow{R;Q} y \text{ iff. } \exists p, \text{ } x \xrightarrow{R} p \xrightarrow{Q} y.
\]

\section{Background: An Axiomatic Framework for Weak Memory}

The execution of a sequential program results in a linear sequence of events (usually reads or writes from/to a location in memory). The event order derived from this sequence is called the \emph{program order}. The program order is a total order on all events, and from this order we can reason in a straightforward way about the possible final states that can result from a run of the program by considering all possible event orderings and demonstrating that they all produce a final state in a particular configuration.

With concurrent programs, however, the situation is more complicated. Generally speaking, an execution on a concurrent machine is not simply a sequence of events with a global program order. Events that occur on different processors are not necessarily comparable, because a write issued by one processor may not be visible to any other processors for some time (despite being immediately visible to the process that executed it). Therefore, in order to specify a set of requirements for our weaker memory guarantees, we need a weakened definition of a program execution that retains enough structure to be amenable to subsequent constraints and analyses. In this section, we describe a compelling axiomatic framework for weak memory \cite{alglave_cats}, which includes both a more general notion of execution for multiple processors and a parameterized set of requirements that is meant to characterize all modern multiprocessor architectures.

\subsection{Concurrent Executions}

We begin with two definitions.

\begin{definition}
An \emph{event} $e$ is an object which consists of a unique identifier $\id(e)$, a process $\proc(e)$, a type $\type(e)$ which identifies $e$ as being either a read or a write, an address $\addr(e)$ equal to the address in memory that $e$ reads from or writes to, and a value $\val(e)$ equal to the value read or written by $e$.
\end{definition}

\begin{definition}
An \emph{execution} is a tuple $E = (\mathbb{E}, \po, \co, \rf)$ where $\mathbb{E}$ is a collection of events, and $\po$, $\co$, and $\rf$ are all relations on $\mathbb{E}$ satisfying:
\begin{itemize}
  \item $\po$ is a total order on events, when restricted to a single process
  \item $\co$ is a total order on writes, when restricted to a single address
  \item $\rf$ is a relation from writes to reads such that for all reads $r \in \mathbb{E}$, there exists a unique write $w \in \mathbb{E}$ such that $w \rfto r$ (we also require that $\val(w) = \val(r)$).
\end{itemize}
The relation $\po$ is undefined on events belonging to different processes, and likewise, $\co$ is undefined on any pair of events that are not writes to the same address.
\end{definition}

The relation $\po$ is our concurrent version of program order; it is a total order not on all events, but only on those belonging to the same processor. The ``coherence order'' $\co$ is a total order on writes to the same location in memory. This order corresponds to our intuition that the writes to each individual location hit memory in a particular sequential order. The read-from relation $\rf$ captures the dependency between writes and reads; $w \rfto r$ means ``$r$ takes its value from the write $w$.'' \footnote{The reader may be wondering why we choose to write $w \rfto r$ rather than $r \rfto w$ - the latter certainly seems more sensible when read aloud (``r read-from w''). The reason is that the direction of the arrow is meant to represent a dependency between two events, with the arrow pointing toward the dependent (``later'') event. This will enable us to state our weak memory requirements as assertions of the acyclicity of various combinations of these and other relations.} It is a surjective relation with a one-sided inverse function, $\rfinv$.

The purpose of $\co$ and $\rf$ is to capture interprocess dependencies between events occurring at the same location; $\co$ captures dependencies between two writes arising from their relative visibility with respect to time, and $\rf$ captures the dependency of reads on the writes they take their value from. However, it is also intuitively possible to have a write ``depend'' on a read. If $w, w'$ are writes and $r$ is a read such that $w' \rfto r$ and $w' \coto w$, then there is a sense in which $w$ ``comes after'' $r$, because $r$ takes its value from an earlier write. Therefore, we have another relation, which we refer to as the ``from-read'' relation.

\begin{definition}
Let $E = (\mathbb{E}, \po, \co, \rf)$ be an execution. The ``from-read'' relation $\fr$ is defined as
\[
\fr = \rfinv ; \co,
\]
i.e. $r \frto w$ if there exists a write $w'$ such that $w' \rfto r$ and $w' \coto w$. (Note that this is equivalent to stating that $\rfinv(r) \coto w$.)
\end{definition}

Our three relations $\rf$, $\co$, and $\fr$ will be sufficient to specify certain communication dependencies regarding reads and writes to the same location. We abbreviate the three into a single relation.

\begin{definition}
Let $E = (\mathbb{E}, \po, \co, \rf)$ be an execution. The relation $\com$ is defined as
\[
\com = \co \cup \rf \cup \fr,
\]
i.e. $x \comto y$ if $x \coto y$, $x \rfto y$, or $x \frto y$.
\end{definition}

The $\po$ and $\com$ relations represent two distinct types of dependencies between events; $\po$ captures \emph{per-process} dependencies, and $\com$ relation captures \emph{per-location} dependencies. The existence of these two relations suggests two distinct views of our event graph. The first is the per-process view, where we organize all the events by the process they belong to, and list them in program order (see Figure \ref{events:process}). The second is the per-location view, where we organize the events by the memory location at which they occur, and list each write event in coherence order (see Figure \ref{events:location} for an example of what this might look like for a particular location $M_0$).
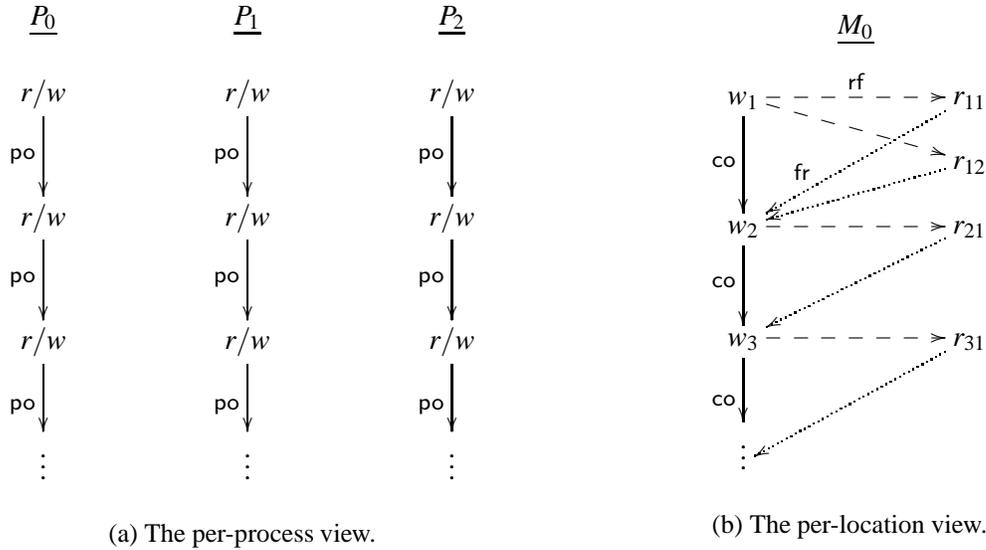
\begin{figure}[t]
\begin{subfigure}{.5\textwidth}
\[
\xymatrix@R1pc{
\underline{P_0} && \underline{P_1} && \underline{P_2}\\
r/w \ar[dd]_{\po} && r/w \ar[dd]_{\po} && r/w \ar[dd]_{\po} \\\\
r/w \ar[dd]_{\po} && r/w \ar[dd]_{\po} && r/w \ar[dd]_{\po} \\\\
r/w \ar[dd]_{\po} && r/w \ar[dd]_{\po} && r/w \ar[dd]_{\po} \\\\
\vdots && \vdots && \vdots \\
}
\]
\caption{The per-process view.}
\label{events:process}
\end{subfigure}
\begin{subfigure}{.5\textwidth}
\[
\xymatrix@R1pc{
& \underline{M_0} &\\
w_1 \ar[dd]_{\co} \ar@{-->}[rr]^{\rf} \ar@{-->}[drr] && r_{11} \ar@{..>}[ddll]_(0.7){\fr} \\
                                                     && r_{12} \ar@{..>}[dll] \\
w_2 \ar[dd]_{\co} \ar@{-->}[rr]                      && r_{21} \ar@{..>}[ddll]\\\\
w_3 \ar[dd]_{\co} \ar@{-->}[rr]                      && r_{31} \ar@{..>}[ddll]\\\\
\vdots 
}
\]
\caption{The per-location view.}
\label{events:location}
\end{subfigure}
\caption{Two views of memory events. In figure (b), solid lines are $\co$, dashed lines are $\rf$, and dotted lines are $\fr$. For $\po$, $\co$ and $\fr$, not all arrows are pictured, as $\po$ and $\co$ are transitively closed.}
\label{events}
\end{figure}

\subsection{Sequential Consistency and SC-Per-Location}\label{sc-per-location-section}

In the previous section, we presented a generalization of the notion of a sequential execution to an arbitrary number of processors. Whereas a sequential execution has a single relation, the program order (which is a total order on all events), a concurrent execution consists of two: its per-process program order $\po$, and the communication dependency relation $\com$. In our framework, the usual definition of sequential consistency \cite{lamport79} is that there exists a completion of the relation $\po \cup \com$ which is a total order on all events. An equivalent way to state this is that the relation $\po \cup \com$ is acyclic, and so we have the following definition:
\begin{definition}
An execution $E = (\mathbb{E}, \po, \co, \rf)$ is \emph{sequentially consistent} (SC) if 
\[
\acyclic(\po \cup \com),
\]
i.e. the union of the $\po$ and $\com$ relations is acyclic.
\end{definition}
As we have already discussed, sequential consistency does not hold in general for modern multiprocessor architectures. However, if we restrict the program order $\po$ to events at the same location, then we get a new, weaker property. As it happens, this property holds for all modern architectures.

To this end, we define another relation, $\pol$, which is the restriction of $\po$ to events that occur at the same location.

\begin{definition}
Let $E = (\mathbb{E}, \po, \co, \rf)$ be an execution. The relation $\pol$ is defined as
\[
\pol = \{ (x,y) \in \mathbb{E} \times \mathbb{E} \mid x \poto y \text{ and } \addr(x) = \addr(y)\},
\]
i.e. $x \polto y$ if $x \poto y$ and $x$ and $y$ have the same address.
\end{definition}

We are now in a position to reproduce the definition for a weakened version of sequential consistency for concurrent executions (originally given in \cite{alglave_cats}), which we refer to as sequential consistency per location.

\begin{definition}
An execution $E = (\mathbb{E}, \po, \co, \rf)$ is \emph{sequentially consistent per location} (SC-Per-Location) if 
\[
\acyclic(\pol \cup \com),
\]
i.e. the union of the $\pol$ and $\com$ relations is acyclic.
\end{definition}

The intuition behind this definition is that if we restrict ourselves to examining one memory location, the system appears to be sequentially consistent. The acyclicity of program order and the communication relations $\co$, $\rf$ and $\fr$ guarantee the existence of a sequential execution of these events that produces the same behavior (for \emph{this} memory location) as the concurrent one. However, this cannot necessarily be generalized to multiple memory locations; the sequential ordering of events for one location may conflict (i.e. create a cycle) with the sequential ordering for another location.

\subsection{The full set of requirements}

SC-Per-Location is one of the four requirements of this framework. It is the only requirement described solely in terms of executions; the other three are defined in terms of a particular architecture. This requires a formal definition of an architecture.

\begin{definition}
An \emph{architecture} is a function $\mathcal{A}$ which maps executions $E = (\mathbb{E}, \po, \co, \rf)$ to tuples 
\[
(\ppo, \fence, \prop)
\] 
such that for all executions $E$,
\begin{itemize}
  \item $\ppo \subseteq \po$
  \item $\fence$ is some relation on events
  \item $\prop$ is some relation on the writes of $\mathbb{E}$ (not necessarily to the same location)
\end{itemize}
\end{definition}
Here, the relation $\ppo$ (``preserved program order'') refers to some subset of the program order that relates events which aren't allowed to be reordered in an execution, $\fence$ refers to pairs of events which are separated by a fence, and $\prop$ (``propagation order'') refers to additional constraints (beyond those specified by $\co$) on the order in which events get propagated to memory.

This definition formulates the notion of an architecture as a set of further restrictions on executions. Depending on how we define the orders $\ppo$, $\fence$, and $\prop$ on an execution, our model will satisfy different memory constraints, because our constraints are defined in terms of these relations. The set of all possible architectures that can be specified from this framework corresponds to all the different ways we can define these relations in terms of a given execution.

The full set of weak memory requirements is as follows. Let $\mathcal{A}$ be an architecture. Then for any execution $E = (\mathbb{E}, \po, \co, \rf)$, we require
\[
\begin{aligned}
  \text{(SC-Per-Location)  } &&& \acyclic(\pol \cup \co \cup \rf \cup \fr)\\
  \text{(No Thin Air)  } &&& \acyclic(\hb)\\
  \text{(Observation)  } &&& \irreflexive(\fre ; \prop; \hb^*)\\
  \text{(Propagation)  } &&& \acyclic(\co \cup \prop)
\end{aligned}
\]
where 
\[
\hb = \ppo \cup \fence \cup \rfe,
\] 
\[
\rfe = \{(x,y) \mid x \rfto y \text{ and } \proc(x) \neq \proc(y)\},
\] 
and 
\[
\fre = \{(x,y) \mid x \frto y \text{ and } \proc(x) \neq \proc(y)\},
\] 
and $\hb^*$ is the reflexive transitive closure of $\hb$.

SC-Per-Location was described above; the other three requirements are discussed thoroughly in \cite{alglave_cats}, and are best understood in the context of the various examples provided in that work. We present the full framework here for completeness, but our investigation into these properties was limited to SC-Per-Location.

\subsection{SC-Per-Location: an alternate definition}

The definition we have for SC-Per-Location makes intuitive sense - it corresponds directly to the classic definition of sequential consistency. However, as it turns out, this definition is equivalent to a seemingly weaker property (originally introduced in \cite{alglave_thesis}), which we reproduce below. 

\begin{definition}
An execution $E = (\mathbb{E}, \po, \co, \rf)$ satisfies the property \emph{SC-Per-Location-2} if 
\[
\forall x, y \in \mathbb{E}, \text{ } x \polto y \implies \neg (y \compto x)
\]
i.e. no two events be related by $\pol$ in one direction and $\comp$ in the other direction.
\end{definition}

This alternate definition captures the intuition that if an event precedes another event in program order, it cannot have a communication dependency (or a sequence of dependencies) on the latter event. Clearly, the existence of such a dependency would create a cycle in $\pol \cup \com$, and so it is easy to see that SC-Per-Location implies SC-Per-Location-2. As it turns out, this definition of SC-Per-Location-2 is actually equivalent to the one given in Section \ref{sc-per-location-section}; this was first proved in Alglave \cite{alglave_thesis} and we give a new proof of this result in the next section.

Now, as it turns out, the $\comp$ relation can be written as the union of the five relations $\rf, \co, \fr, \corf,$ and $\frrf$. We state this as a theorem, and provide a sketch of the proof.

\begin{theorem}\label{rewrite_com+}
Let $E = (\mathbb{E}, \po, \co, \rf)$ be an execution. Then we have
\[
\comp = \com \cup (\corf ) \cup (\frrf).
\]
\end{theorem}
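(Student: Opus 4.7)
The plan is to prove both inclusions of the set equality, with the non-trivial direction being $\comp \subseteq \com \cup (\corf) \cup (\frrf)$, established by induction on the length of a $\com$-path.

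First I would establish the type signatures of the three elementary relations: $\co$ goes from writes to writes, $\rf$ goes from writes to reads, and $\fr$ goes from reads to writes. From these types, several compositions are immediately empty because the target type of the first relation does not match the source type of the second: $\rf;\rf$, $\rf;\co$, $\co;\fr$, and $\fr;\fr$. Next I would establish three reduction lemmas: $\co;\co \subseteq \co$ (transitivity of coherence), $\fr;\co \subseteq \fr$ (unfolding $\fr = \rfinv;\co$ and using transitivity of $\co$), and $\rf;\fr \subseteq \co$ (if $w \rfto r$ and $r \frto w'$, then by the one-sided inverse property of $\rf$ we have $\rfinv(r) = w$, so $w \coto w'$).

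With this toolkit in place, I would prove by induction on $n \geq 1$ that every pair $(x,y)$ connected by an $n$-step $\com$-path lies in the set $S := \com \cup (\corf) \cup (\frrf)$. The base case $n=1$ is immediate since the path is a single $\com$-edge. For the inductive step, suppose $x \comto p$ is already in $S$ and we append one more $\com$-edge $p \comto y$. I would case split on which of the five forms $(x,p)$ takes, and on the type of the appended edge. The type signatures eliminate most cases, and the three reduction lemmas handle the rest: for instance, a prefix of form $\co;\rf$ ends at a read, so only an $\fr$ edge can be appended, yielding $\co;\rf;\fr \subseteq \co;\co \subseteq \co$; a prefix of form $\fr;\rf$ followed by $\fr$ gives $\fr;\rf;\fr \subseteq \fr;\co \subseteq \fr$; and so on. Each of the five prefix shapes admits only one or two extensions, and each collapses back into $S$.

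The reverse inclusion is essentially free: $\com \subseteq \comp$ is immediate, and $\corf$ and $\frrf$ are each two-step $\com$-paths, hence are in $\comp$.

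The main obstacle is purely bookkeeping: there are five prefix forms and three possible next edges, giving fifteen cases to dispatch in the inductive step. Most of these are ruled out by type-checking, so the actual combinatorial work is small, but to make the proof rigorous (and especially to mechanize it) one has to enumerate the cases carefully and invoke the correct reduction lemma in each surviving case. No subtle reasoning about $\po$, uniqueness of identifiers, or SC-Per-Location itself is needed here — only the algebraic structure of $\co$, $\rf$, and $\fr$.
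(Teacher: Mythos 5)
Your proposal is correct and matches the paper's own argument in all essentials: both proceed by induction on the length of a $\com$-path and dispatch the resulting cases by the type signatures of $\co$, $\rf$, and $\fr$ (the paper peels the first edge off the path and appeals to the induction hypothesis on the suffix, whereas you append the last edge to a prefix already in canonical form, but this is a cosmetic difference). Your explicit reduction lemmas $\co;\co \subseteq \co$, $\fr;\co \subseteq \fr$, and $\rf;\fr \subseteq \co$ simply spell out the case analysis the paper leaves as ``easy to demonstrate,'' and they are all correct.
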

\begin{proof}
Suppose we have a path $x \to p_1 \to \cdots \to p_k \to y$, where $\to$ abbreviates $\comto$. We proceed by induction on $k$. If $k = 0$, we have $x \comto y$, and we are done.

Now, suppose $k \geq 1$ and assume inductively that the theorem holds for the all shorter paths. We have
\[
x \to p_1 \to \cdots \to p_k \to y.
\]
Now, the path $p_1 \to \cdots \to p_k \to y$ is a shorter path, and hence by our induction hypothesis, we have $p_1 \comto y$, $p_1 \corfto y$, or $p_1 \frrfto y$. Furthermore, we have $x \coto p_1$, $x \rfto p_1$, or $x \frto p_1$. If we consider all these cases (many of which are vacuous due to the fact that $\co$, $\rf$ and $\fr$ all relate events of specific types), it is easy to demonstrate that $x \comto y$, $x \corfto y$, or $x \frrfto y$.
\end{proof}

From this theorem, we can clearly see that an execution satisfies SC-Per-Location-2 if and only if it does not contain any of the patterns in Figure \ref{patterns}. We will ultimately prove that SC-Per-Location is equivalent to SC-Per-Location-2, which guarantees that a cycle of any kind in $\pol \cup \com$, no matter how big the cycle is, will imply the existence of a ``mini''-cycle of one of these five variants.

\begin{figure}
\[
\xymatrix{
w_1 \ar@/_0.5pc/[d]_{\pol} && r \ar@/_0.5pc/[d]_{\pol} && w \ar@/_0.5pc/[d]_{\pol} \\
w_2 \ar@/_0.5pc/[u]_{\co} && w \ar@/_0.5pc/[u]_{\rf} && r \ar@/_0.5pc/[u]_{\fr}
}
\]

\[
\xymatrix{
r \ar[d]_{\pol} & w_2 \ar[l]_{\rf} && r_1 \ar[d]_{\pol} & w \ar[l]_{\rf} \\
w_1 \ar[ur]_{\co} &&& r_2 \ar[ur]_{\fr}
}
\]
\caption{The five patterns prohibited by SC-Per-Location-2.}
\label{patterns}
\end{figure}
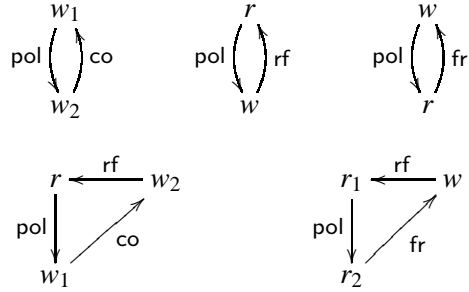

\subsection{An equivalence theorem}

Before we state and prove the equivalence theorem (originally proved in \cite{alglave_thesis}, but proved here in a somewhat more straightforward manner), we first establish two simple lemmas.
\begin{lemma}\label{com-acyclic}
The relation $\comp$ is irreflexive.
\end{lemma}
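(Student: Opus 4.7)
The plan is to reduce the problem to showing irreflexivity of each of the three relations appearing in the decomposition provided by Theorem \ref{rewrite_com+}, namely $\com$, $\corf$, and $\frrf$. Since $\comp = \com \cup (\corf) \cup (\frrf)$, it suffices to verify that none of these three relations contains a self-loop. The argument for each case is essentially a type-check on the endpoints, exploiting the fact that every event is either a read or a write (but not both), together with the irreflexivity of $\co$ and the functional nature of $\rfinv$.

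First I would dispatch $\com = \co \cup \rf \cup \fr$ directly. The relation $\co$ is a total order on writes to a single address, hence irreflexive by definition. The relation $\rf$ goes from a write to a read, and $\fr$ goes from a read to a write, so neither can relate an event to itself. Next, for $\corf$: if $x \corfto x$, then $x$ is the source of a $\co$-edge (hence a write) and the target of an $\rf$-edge (hence a read), which is impossible since $\type(x)$ is uniquely determined.

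The only case requiring real content is $\frrf$. Suppose $x \frrfto x$, so there exists $p$ with $x \frto p \rfto x$. By definition of $\fr$, there is a write $w$ with $w \rfto x$ and $w \coto p$. But $\rf$ has the property that every read is related to a \emph{unique} write via $\rfinv$, so from $w \rfto x$ and $p \rfto x$ we conclude $w = p$. Substituting gives $w \coto w$, contradicting irreflexivity of $\co$.

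The main obstacle, if any, is simply keeping the type bookkeeping straight across the three cases; the substantive ingredient is the functional behavior of $\rfinv$, which is what allows the $\frrf$ case to collapse to a $\co$-self-loop. Once Theorem \ref{rewrite_com+} is in hand, the proof is otherwise a short combination of definitional unfolding and elementary case analysis.
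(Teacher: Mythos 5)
Your proof is correct and follows essentially the same route as the paper: both invoke Theorem \ref{rewrite_com+} to split into the cases $\com$, $\corf$, and $\frrf$, dispatch the first two by type-checking and irreflexivity of $\co$, and resolve the $\frrf$ case by using the uniqueness of the write associated to each read under $\rf$ to collapse to a $\co$-self-loop.
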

\begin{proof}
Suppose $x \compto x$. By Theorem \ref{rewrite_com+}, we have three cases.

\emph{Case 1}: $x \comto x$. This is impossible; $\co$ is irreflexive by definition (it is an irreflexive total order), and $\rf$ and $\fr$ are both trivially irreflexive because they only relate events of different types.

\emph{Case 2}: $x \corfto x$. This is impossible; $\corf$ relates writes to reads, and hence is irreflexive.

\emph{Case 3}: $x \frrfto x$. Then there exists an event $z$ with $x \frto z \rfto x$; this in turn implies the existence of an event $y$ with $y \rfto x$, $y \coto z$, and $z \rfto x$. By the uniqueness of writes for the $\rf$ relation, we must have $y = z$; therefore $y \coto y$, which is impossible since $\co$ is irreflexive.
\end{proof}

Upon examination of Figure \ref{events:location}, it is intuitively clear that any two events in this picture either on the same ``level'', or there is a path from one to the other. This is precisely what Lemma \ref{comp-total} says.

\begin{lemma}\label{comp-total}
Let $E = (\mathbb{E}, \po, \co, \rf)$ be an execution, and let $x, y \in \mathbb{E}$ with $\addr(x) = \addr(y)$. Then one of the following holds:
\begin{enumerate}
  \item $x \compto y$
  \item $x$ and $y$ are both writes, and $x = y$
  \item $x$ and $y$ are both reads, and $\rfinv(x) = \rfinv(y)$
  \item $y \compto x$.
\end{enumerate}
\end{lemma}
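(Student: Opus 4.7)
The plan is to perform a four-way case split on the types of $x$ and $y$ (read/write), and in each case reduce the question to the totality of $\co$ on writes to the common address $\addr(x)=\addr(y)$, using $\rfinv$ to replace any read by the unique write it reads from. By Theorem \ref{rewrite_com+}, to establish $x \compto y$ it suffices to exhibit a chain between $x$ and $y$ matching one of $\co,\rf,\fr,\corf,\frrf$, so in each case I just check that the chain I construct falls into one of these five shapes.

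The both-writes case is immediate from the totality of $\co$: it yields $x\coto y$ (case 1), $x=y$ (case 2), or $y\coto x$ (case 4). For the mixed case where $x$ is a write and $y$ is a read, I let $w=\rfinv(y)$, which is a write at $\addr(x)$, and compare $w$ with $x$ via $\co$: the three resulting sub-cases give $x=w\rfto y$ (so $x\rfto y$), $x\coto w\rfto y$ (so $x\corfto y$), and $w\coto x$ (so $y\frto x$ by the definition of $\fr$). The first two yield case 1, the third case 4; the read/write case is symmetric.

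The both-reads case is handled similarly by setting $w_x=\rfinv(x)$ and $w_y=\rfinv(y)$, which are writes at the common address. Totality of $\co$ gives three sub-cases: $w_x=w_y$ is exactly case 3; $w_x\coto w_y$ yields the chain $x\frto w_y\rfto y$, i.e.\ $x\frrfto y$, giving case 1; and $w_y\coto w_x$ is symmetric, giving case 4.

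There is no real conceptual obstacle here; the main care is bookkeeping — verifying that each chain I assemble is one of the five forms from Theorem \ref{rewrite_com+}, and, in the mixed and both-reads cases, correctly translating a $\co$-relationship between underlying writes into an $\fr$, $\corf$, or $\frrf$ step between the original events via the definition $\fr=\rfinv;\co$.
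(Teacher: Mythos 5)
Your proposal is correct and follows essentially the same route as the paper's proof: a type-based case split (collapsed to three cases by symmetry), with each case reduced to the totality of $\co$ after replacing reads by their $\rfinv$ writes, and the resulting chains identified as $\co$, $\rf$, $\corf$, $\fr$, or $\frrf$ instances of $\comp$. The only cosmetic difference is your explicit appeal to Theorem \ref{rewrite_com+} to certify the chains, which the paper leaves implicit.
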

\begin{proof}
We have four cases, corresponding to $x$ and $y$ each being either reads or writes; however, the symmetry of the read-write cases reduces the number to three. In all three cases, the theorem reduces to the totality of $\co$.

\emph{Case 1}: $x$ is a write, $y$ is a write. Then by totality of $\co$, either $x \coto y$, $y \coto x$, or $x = y$.

\emph{Case 2}: $x$ is a write, $y$ is a read. Then by totality of $\co$, either $x \coto \rfinv(y)$, $x = \rfinv(y)$, or $\rfinv(y) \coto x$. In the first case, $x \corfto y$; in the second, $x \rfto y$; and in the third, $y \frto x$.

\emph{Case 3}: $x$ is a read, $y$ is a read. Then by totality of $\co$, either $\rfinv(x) \coto \rfinv(y)$, $\rfinv(x) = \rfinv(y)$, or $\rfinv(y) \coto \rfinv(x)$. In the first case, $x \frrfto y$; in the second, we are done; and in the third, $y \frrfto x$.
\end{proof}

\begin{theorem}
Let $E$ be an execution. Then $E$ satisfies SC-Per-Location if and only if $E$ satisfies SC-Per-Location-2.
\end{theorem}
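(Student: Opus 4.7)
The forward direction is immediate: if SC-Per-Location-2 fails, then witnesses $x, y$ with $x \polto y$ and $y \compto x$ yield a path $y \comto \cdots \comto x$ which, together with $x \polto y$, forms a cycle in $\pol \cup \com$.

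For the converse, I will assume SC-Per-Location-2 and suppose for contradiction that a cycle exists in $\pol \cup \com$. The key structural observation is that every edge of $\pol$, $\co$, $\rf$, and $\fr$ relates events at the same address, so every event along the cycle shares a common address. The cycle must contain at least one $\pol$ edge, since $\comp$ is irreflexive by Lemma~\ref{com-acyclic}, and at least one $\com$ edge, since $\pol \subseteq \po$ is irreflexive. Using transitivity of $\pol$ and applying Theorem~\ref{rewrite_com+} to collapse each maximal run of consecutive $\com$ edges into a single $\comp$ edge, I may assume the cycle has the alternating form
\[
e_0 \polto e_1 \compto e_2 \polto e_3 \compto \cdots \polto e_{2m-1} \compto e_0
\]
for some $m \geq 1$, and I will prove by induction on $m$ that this is impossible.

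The base case $m = 1$ is the cycle $e_0 \polto e_1 \compto e_0$, which directly contradicts SC-Per-Location-2. For $m \geq 2$, I apply Lemma~\ref{comp-total} to the pair $e_0, e_2$, which share the common address. If $e_0 \compto e_2$, I splice this shortcut into the cycle and absorb the now-adjacent $\comp$ edges, producing an alternating cycle with only $m - 1$ alternations, to which the induction hypothesis applies. If $e_2 \compto e_0$, then $e_1 \compto e_2 \compto e_0$ collapses to $e_1 \compto e_0$, yielding the immediate violation $e_0 \polto e_1 \compto e_0$; the same conclusion holds if $e_0 = e_2$ are both writes.

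The main obstacle is the remaining case of Lemma~\ref{comp-total}, where $e_0$ and $e_2$ are both reads with $\rfinv(e_0) = \rfinv(e_2) =: w$. I will handle this by a sub-case analysis on the edge $e_1 \compto e_2$ via the decomposition of Theorem~\ref{rewrite_com+}: if $e_1 \rfto e_2$ then $e_1 = w$, hence $e_1 \rfto e_0$; if $e_1 \corfto e_2$ then $e_1 \coto w$, hence $e_1 \corfto e_0$; and if $e_1 \frrfto e_2$ then $\rfinv(e_1) \coto w = \rfinv(e_0)$, hence $e_1 \frrfto e_0$. In every sub-case I obtain $e_1 \compto e_0$, which together with $e_0 \polto e_1$ once again contradicts SC-Per-Location-2, completing the induction.
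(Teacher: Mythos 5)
Your proof is correct and follows essentially the same route as the paper's: an induction that shortens the cycle by applying the totality of $\comp$ on same-address events (Lemma~\ref{comp-total}) to the endpoints of a $\pol;\comp$ segment, with the identical four-way case split and the same resolution of the both-reads case by transferring the $\comp$ edge through the shared $\rfinv$. The only differences are organizational: you pre-normalize the cycle into alternating $\pol$/$\comp$ form and induct on the number of alternations (which halves the case analysis relative to the paper's 3a/3b and 4a/4b split), and you spell out the ``both reads'' sub-cases that the paper dismisses as straightforward.
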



\begin{proof}
It is clear that SC-Per-Location implies SC-Per-Location-2.

We prove the other direction by contrapositive. Suppose SC-Per-Location does not hold; that is, there exists a cycle in $\pol \cup \com$. Clearly any such cycle is also a cycle in $\pol \cup \comp$ (since $\com \subseteq \comp$). We proceed by induction on the length of this cycle, noting trivially that the length cannot be $1$ (because we know that $\pol$ and $\com$ are both irreflexive).

If the cycle has length two, we must either have $x \polto p \compto x$ or $x \compto p \polto x$, because both of these relations are by themselves acyclic. In either case, the SC-Per-Location-2 condition is clearly violated by $x$ and $p$.

Suppose the cycle has length three or more, i.e.
\[
x \to p_1 \to p_2 \to \cdots \to x,
\]
where $\to$ abbreviates the union of $\pol$ and $\comp$. Also, inductively assume that the existence of a shorter cycle implies that SC-Per-Location-2 does not hold. Assume that $x \compto p_1 \polto p_2$ or $x \polto p_1 \compto p_2$, because otherwise it is clear by transitivity of $\comp$ and $\pol$ that we can obtain a shorter cycle $x \to p_2 \to \cdots \to x$, and so by our inductive hypothesis SC-Per-Location-2 doesn't hold. Then we have several cases, based on Lemma \ref{comp-total}.

\emph{Case 1}: $x \compto p_2$. Then we have the shorter cycle $x \compto p_2 \to \cdots \to x$, and so by our inductive hypothesis, SC-Per-Location-2 does not hold.

\emph{Case 2}: $x$ and $p_2$ are writes where $x = p_2$. Then clearly $x = p_2 \to \cdots \to x$ is a shorter cycle, so by our inductive hypothesis, SC-Per-Location-2 does not hold.

\emph{Case 3a}: $x$ and $p_2$ are reads where $\rfinv(x) = \rfinv(p_2)$, and $x \compto p_1 \polto p_2$. Then it is straightforward to show that $p_2 \compto p_1$, giving $p_1 \polto p_2 \compto p_1$, which violates SC-Per-Location-2.

\emph{Case 3b}: $x$ and $p_2$ are reads where $\rfinv(x) = \rfinv(p_2)$, and $x \polto p_1 \compto p_2$. Then it is straightforward to show that $p_1 \compto x$, giving $x \polto p_1 \compto x$, which violates SC-Per-Location-2.

\emph{Case 4a}: $p_2 \compto x$, and $x \compto p_1 \polto p_2$. Then clearly $p_2 \compto p_1$, giving $p_1 \polto p_2 \compto p_1$, which violates SC-Per-Location-2.

\emph{Case 4b}: $p_2 \compto x$, and $x \polto p_1 \compto p_2$. Then clearly $p_1 \compto x$, giving $x \polto p_1 \compto x$, which violates SC-Per-Location-2.

By Lemma \ref{comp-total} there are no other possibilities. Therefore by induction, if SC-Per-Location does not hold then SC-Per-Location-2 does not hold, and the proof is complete.
\end{proof}

We believe this proof is new. Its direct use of an inductive argument and a ``totality'' lemma (Lemma \ref{comp-total}) for $\comp$ both distinguishes it from the original \cite{alglave_thesis}, and makes its mechanization in ACL2 much easier. One of ACL2's big strengths is its ability to prove theorems inductively, and by understanding an inductive hand proof of this theorem, we were able to make the ACL2 proof much more straightforward.

\section{ACL2 Mechanization}

In this section we present our ACL2 mechanization of the framework and proofs presented above. We make extensive use of the \verb|defun-sk| construct; our definitions of the relations $\po$, $\co$, $\rf$, and $\fr$, as well as various combinations of these relations, are introduced with \verb|defun-sk| in order to make the concepts as general as possible; instead of defining them in terms of a specific data structure (like a graph), we define them as completely general relations which satisfy only the properties we require.

For clarity, we have chosen to present the ACL2 mechanization in a separate section from the preceding one. We have also opted to reproduce most of the definitions, theorems, and even a few key lemmas in order to give the reader a fuller understanding of how these ideas were mechanized. The interested reader might gain some insight into reading the ACL2 code carefully, but is encouraged to skim through it if necessary.

\subsection{Mechanization of Concurrent Executions}

We formalize the concepts of events, $\po$, $\co$, and $\rf$ as constrained functions that satisfy the requirements given in the previous section.
\begin{qv}
(encapsulate
 (((writep *) => *)
  ((readp *) => *)

  ((addr *) => *)
  ((proc *) => *)

  ((po * *) => *)
  ((rf * *) => *)
  ((co * *) => *)
  
  ((rf-inv-fn *) => *))
  
  ; ... constraints omitted
)
\end{qv}
The required properties of these functions are guaranteed by a number of exported theorems, such as totality of \verb|po| on events in the same process, totality of \verb|co| on writes to the same location, and the one-sided invertibility of \verb|rf| (this last property implicitly make use of \verb|rf|'s inverse function \verb|rf-inv-fn|).

We define the function \verb|fr| in terms of \verb|co| and \verb|rf| using ACL2's \verb|defun-sk| construct:
\begin{qv}
(defun-sk fr (x z)
  (exists y
    (and (rf y x) (co y z))))
\end{qv}
We define the ACL2 analogues of sequenced relations $\corf$ and $\frrf$ similarly:
\begin{qv}
(defun-sk co->rf (x z)
  (exists y
    (and (co x y) (rf y z))))
(defun-sk fr->rf (x z)
  (exists y
    (and (fr x y) (rf y z))))
\end{qv}
We define the functions \verb|com| and \verb|pol| as expected:
\begin{qv}
(defun com (x y)
  (or (co x y)
      (rf x y)
      (fr x y)))
(defun pol (x y)
  (and (po x y)
       (equal (addr x) (addr y))))
\end{qv}
The transitive closure of \verb|com| is defined in terms of the existence of a path:
\begin{qv}
(defun com-pathp (path x y)
  (cond ((endp path) (com x y))
        (t (and (com x (car path))
                (com-pathp (cdr path) (car path) y)))))
(defun-sk com+ (x y)
  (exists path (com-pathp path x y)))
\end{qv}
The variable \verb|path| represents the elements between (and not including) \verb|x| and \verb|y|. We prove that we can rewrite \verb|com+| according to Theorem \ref{rewrite_com+}:
\begin{qv}
(defthm rewrite-com+
  (equal (com+ x y)
         (or (com x y)
             (co->rf x y)
             (fr->rf x y))))
\end{qv}
We prove that \verb|com+| is irreflexive, corresponding to Lemma \ref{com-acyclic}:
\begin{qv}
(defthm com+-irreflexive
  (not (com+ x x)))
\end{qv}
And we prove a theorem about the ``totality'' of \verb|com+|, corresponding to Lemma \ref{comp-total}:
\begin{qv}
(defthm com+-totality
  (implies (and (or (readp x) (writep x))
                (or (readp y) (writep y))
                (equal (addr x) (addr y))
                (not (com+ x y))
                (not (and (writep x)
                          (writep y)
                          (equal x y)))
                (not (and (readp x)
                          (readp y)
                          (equal (rf-inv-fn x) (rf-inv-fn y)))))
           (com+ y x)))
\end{qv}

The majority of these theorems were proven by ACL2 with no hints other than the occasional instantiation of witness functions and the selective enabling/disabling of functions and theorems.

\subsection{Mechanization of both definitions of SC-Per-Location}

In order to define SC-Per-Location in ACL2, we need to define the notion of a ``cycle'' in the union of \verb|pol| and \verb|com|. We first define the union of these two relations:
\begin{qv}
(defun pol-com (x y)
  (or (pol x y)
      (com x y)))
\end{qv}
Then we define the notion of a path in $\verb|pol-com|$:
\begin{qv}
(defun pol-com-pathp (path x y)
  (cond ((endp path) (pol-com x y))
        (t (and (pol-com x (car path))
                (pol-com-pathp (cdr path) (car path) y)))))
\end{qv}
If \verb|path| is \verb|nil|, this definition reduces to \verb|(pol-com x y)|. Now, we can define a cycle in \verb|pol-com| as
\begin{qv}
(defun pol-com-cyclep (cycle x)
  (pol-com-pathp cycle x x))
\end{qv}
SC-Per-Location states that there does not exist a cycle in \verb|pol-com|. This can be stated as 
\[
(\forall \verb|x|, \verb|cycle|) \text{  } \verb|(not (pol-com-cyclep cycle x))|.
\]
We can thus define SC-Per-Location in ACL2 as
\begin{qv}
(defun-sk sc-per-location-1 ()
  (forall (x cycle)
          (not (pol-com-cyclep cycle x))))
\end{qv}
SC-Per-Location-2 can be easily defined as
\begin{qv}
(defun-sk sc-per-location-2 ()
  (forall (x y)
          (implies (pol x y)
                   (not (com+ y x)))))
\end{qv}

\subsection{Mechanization of the equivalence proof, Part 1}
As before, the easy part of the equivalence proof is the fact that \verb|(sc-per-location-1)| implies \linebreak\verb|(sc-per-location-2)|. The first step involved proving an unquantified version of the theorem, where we assume \verb|(pol x y)| and \verb|(com+ y x)|, and consider the three cases afforded by \verb|rewrite-com+|:
\begin{qv}
(defthm pol-com-cycle
  (implies (and (pol x y)
                (com y x))
           (pol-com-cyclep (list y) x)))
(defthm pol-co->rf-cycle
  (implies (and (pol x y)
                (co->rf y x))
           (pol-com-cyclep (list y (co->rf-witness y x)) x)))
(defthm pol-fr->rf-cycle
  (implies (and (pol x y)
                (fr->rf y x))
           (pol-com-cyclep (list y (fr->rf-witness y x)) x)))
\end{qv}
Then we add \verb|sc-per-location-1| back into these theorems with \verb|:instance| hints:
\begin{qv}
(defthm pol-com-not-sc-per-location-1
  (implies (and (sc-per-location-1)
                (pol x y))
           (not (com y x)))
  :hints (("Goal"
           :use ((:instance sc-per-location-1-necc
                            (x x)
                            (potential-cycle (list y)))))))
(defthm pol-co->rf-not-sc-per-location-1
  (implies (and (sc-per-location-1)
                (pol x y))
           (not (co->rf y x)))
  :hints (("Goal"
           :use ((:instance sc-per-location-1-necc
                  (x x)
                  (potential-cycle (list y (co->rf-witness y x))))))))
(defthm pol-fr->rf-not-sc-per-location-1
  (implies (and (sc-per-location-1)
                (pol x y))
           (not (fr->rf y x)))
  :hints (("Goal"
           :use ((:instance sc-per-location-1-necc
                  (x x)
                  (potential-cycle (list y (fr->rf-witness y x))))))))
\end{qv}
Finally, we state the fully quantified version of the theorem, which ACL2 proves immediately:
\begin{qv}
(defthm sc-per-location-1-implies-2
  (implies (sc-per-location-1)
           (sc-per-location-2)))
\end{qv}

\subsection{Mechanization of the equivalence proof, Part 2}
The proof that \verb|(sc-per-location-2)| implies \verb|(sc-per-location-1)| was broken down into 4 steps:
\begin{enumerate}
  \item Prove that any 2-cycle in \verb|pol-com+| violates \verb|sc-per-location-2|, and that if there is a cycle of length 3 or greater in \verb|pol-com+|, where \verb|pol-com+| is the union of \verb|pol| and \verb|com+|, then there is a smaller cycle in \verb|pol-com+|, and   \item Use the theorem in step 1 to define a function, \verb|collapse-cycle|, which takes a cycle in \verb|pol-com+| and produces a pair \verb|(x y)| such that \verb|(pol x y)| and \verb|(com+ y x)|
  \item Combine steps 1 and 2 to show that if we have a cycle in \verb|pol-com| (i.e. a violation of \linebreak\verb|sc-per-location-1|), we have a pair \verb|(x y)| which violates \verb|sc-per-location-2|
\end{enumerate}

Step 1 is summarized by two theorems, one that states that 2-cycles in \verb|pol-com+| violate \linebreak\verb|sc-per-location-2|, and one that takes cycles longer than 2 and produces a smaller cycle.
\begin{qv}
(defthm cycle-2
  (implies (and (pol-com+-cyclep cycle x) 
                (endp (cdr cycle))
                (not (and (pol x (car cycle))
                          (com+ (car cycle) x))))
           (and (pol (car cycle) x)
                (com+ x (car cycle)))))
(defthm collapse-cycle-thm
  (implies (and (not (pol-com+-cyclep (list p1) x))
                (not (pol-com+-cyclep (list* p2 rst) x))
                (not (pol-com+-cyclep rst x))
                (not (pol-com+-cyclep (list p2) p1)))
           (not (pol-com+-cyclep (list* p1 p2 rst) x))
  :hints (("Goal"
           :cases ((com+ x p2)
                   (and (writep x)
                        (writep p2)
                        (equal x p2))
                   (and (readp x) 
                        (readp p2)
                        (equal (rf-inv-fn x) (rf-inv-fn p2)))
                   (com+ p2 x)))))
\end{qv}
Notice that the case split corresponds exactly to Theorem \ref{comp-total}, just as in the written proof.

For Step 2, we define the function \verb|collapse-cycle| to shorten the cycle according to the previous theorem. The \verb|collapse-cycle| function satisfies the property that if it is given a violation of \linebreak\verb|sc-per-location-1|, it produces a violation of \verb|sc-per-location-2|:
\begin{qv}
(defun collapse-cycle (cycle x)
  (let* ((p1 (car cycle))
         (p2 (cadr cycle))
         (rst (cddr cycle)))
    (cond ((endp cycle) (mv nil x))
          ((endp (cdr cycle))
           (if (pol x (car cycle))
               (mv x (car cycle))
             (mv (car cycle) x)))
          ((pol-com+-cyclep (list* p2 rst) x)
           (collapse-cycle (list* p2 rst) x))
          ((pol-com+-cyclep rst x)
           (collapse-cycle rst x))
          ((pol-com+-cyclep (list p2) p1)
           (collapse-cycle (list p2) p1))
          (t (collapse-cycle (list p1) x)))))
(defthm collapse-cycle-pol-com+
  (implies (pol-com+-cyclep cycle x)
           (mv-let (new-x new-y)
                   (collapse-cycle cycle x)
                   (and (pol new-x new-y)
                        (com+ new-y new-x)))))
\end{qv}

For Step 3, we first add in the quantifier for \verb|sc-per-location-2|:
\begin{qv}
(defthm sc-per-location-1-implies-2-unquantified
  (implies (sc-per-location-2)
           (not (pol-com-cyclep cycle a)))
  :hints (("Goal"
           :use ((:instance sc-per-location-2-necc
                            (x (mv-let (new-x new-y)
                                       (collapse-cycle cycle a)
                                       (declare (ignore new-y))
                                       new-x))
                            (y (mv-let (new-x new-y)
                                       (collapse-cycle cycle a)
                                       (declare (ignore new-x))
                                       new-y)))))))
\end{qv}
The result follows immediately:
\begin{qv}
(defthm sc-per-location-2-implies-1
  (implies (sc-per-location-2)
           (sc-per-location-1)))
\end{qv}

\subsection{Mechanizing the other requirements}
The other requirements of this framework were also mechanized in ACL2, using constrained functions to represent $\ppo$, $\fence$, and $\prop$, and with $\rfe$, $\fre$, and $\hb$ defined in terms of these constrained functions. The concepts of No Thin Air, Observation, and Propagation were defined as follows:
\begin{verbatim}
(defun-sk no-thin-air ()
  (forall (x potential-cycle)
          (not (hb-cyclep potential-cycle x))))
(defun-sk observation ()
  (forall x
          (not (fre->prop->hb* x x))))
(defun-sk propagation ()
  (forall (x cycle)
          (not (co-prop-cyclep cycle x))))
\end{verbatim}
We did not investigate these requirements to the extent that we analyzed SC-Per-Location. We reproduce their definitions here for completeness.

\section{Conclusions}

In this work, we presented an ACL2 mechanization of a generic framework for weak memory, as well as a novel proof of an established result for this framework. We hope to incorporate this framework into our ongoing research into how a theorem prover like ACL2 can be used to verify correctness properties of real-world concurrent programs. Our most immediate future work consists of applying these concepts (actually, a simplification of these concepts) to proofs on a multi-processor x86 model, but this work suggests the possibility of applying a general weak memory framework to other models as well.

This work was supported by NSF. We gratefully acknowledge the many helpful comments and discussions provided by Jade Alglave and Matt Kaufmann.

\bibliographystyle{eptcs}
\bibliography{axiomatic_weak_memory_bib}

\end{document}